\documentclass[conference]{IEEEtran}
\usepackage{amsmath}
\usepackage{amssymb}
\usepackage{amsthm}
\usepackage{isomath}
\usepackage{caption}
\usepackage[dvipsnames]{xcolor}
\usepackage{enumitem}
\usepackage{hyperref}
\usepackage{cite}
\usepackage{graphicx}

\DeclareMathOperator{\E}{\mathbb{E}}
\DeclareMathOperator{\Var}{\mathrm{Var}}
\newtheorem{lemma}{Lemma}

\newtheorem{proposition}{Proposition}

\hypersetup{
	colorlinks,
	linkcolor={red!50!black},
	citecolor={blue!50!black},
	urlcolor={blue!80!black}
}

\newcommand{\matr}[1]{\boldsymbol{#1}}

\newcommand{\bigzero}{\mathbf{0}}

\author{\IEEEauthorblockN{Olle Abrahamsson, Danyo Danev and Erik G. Larsson}
	\IEEEauthorblockA{Dept. of Electrical Engineering (ISY), \\Link{\"o}ping University, 58183 Link{\"o}ping, Sweden\\
Email: \{olle.abrahamsson, danyo.danev, erik.g.larsson\}@liu.se}}
\title{Opinion Dynamics with Random Actions and a Stubborn Agent}

\begin{document}
	\maketitle
	\begin{abstract}
		We study opinion dynamics in a social network with stubborn agents who influence their neighbors but who themselves always stick to their initial opinion. We consider first the well-known DeGroot model. While it is known in the literature that this model can lead to consensus even in the presence of a stubborn agent, we show that the same result holds under weaker assumptions than has been previously reported. We then consider a recent extension of the DeGroot model in which the opinion of each agent is a random Bernoulli distributed variable, and by leveraging on the first result we establish that this model also leads to consensus, in the sense of convergence in probability, in the presence of a stubborn agent. Moreover, all agents' opinions converge to that of the stubborn agent.
	\end{abstract}
	\section{Introduction}
	The study of opinion dynamics in social networks goes back several decades; for a review, see e.g. \cite{proskurnikov2017,proskurnikov2018}. One of the most well-known models is the DeGroot model \cite{degroot} which has been studied extensively (for a literature survey see for instance \cite[Section 3]{proskurnikov2017} and \cite[Section 3]{proskurnikov2018}). In this model an agent's opinion is represented by a continuous real variable, which at each time step \(n\in\{1,2,\dots\}\) is updated to a linear combination of the opinions of itself and its neighbors,
	\begin{equation}\label{eq:degroot_update}
	\matr{x}[n+1] = \matr{T}\matr{x}[n],
	\end{equation}
	where \(\matr{x}[n]\) represents the agents' opinions at time \(n\) and \(\matr{T}\) is a matrix that encodes the trust between agents (this is explained in detail in Section \ref{sec:DeGroot}).
	
	A particular case in opinion dynamics is where one or more agents are stubborn (agents whose opinions remain unchanged independent of the others' opinions). This scenario was first introduced by Mobilia in 2003 \cite{mobilia2003} who established convergence rates towards consensus under the so-called voter model \cite{sudbury73} with a single stubborn agent. The voter model was again considered in \cite{yildiz} where the optimal placement of stubborn agents for maximal influence on the long-term expected opinions was investigated, among other properties. In \cite{ghaderi} the authors considered a model in which agents can have a continuous degree of stubbornness, and gave bounds on the rate of convergence to a consensus of opinions. A more recent study \cite{wai} showed that the influence of stubborn agents under the DeGroot model can, under suitable conditions, be used to recover the topology of the network. Specifically the authors derived equations for the expected opinions of the ordinary (non-stubborn) agents that depend on the topology, and then showed how a regression problem could be formulated which estimated matrices with information about the topology by observing opinions that fit the equations.
	
	Another class of models incorporates randomness, for example in terms of random interactions \cite{acemoglu2010,mukhopadhyay2016} or as in \cite{hunter2018}, where at each time \(n\) a randomly selected agent communicates a random opinion to its neighbors. The latter model also features the interesting novelty that an agent may grow increasingly stubborn over time. A recent extension of the DeGroot model which incorporates randomness was given in \cite{scaglione}. Under this setting, at every time step \(n\) each agent \(k\) chooses a Bernoulli distributed random action \(A_k[n]~\sim~\mathrm{Bernoulli}(X_k[n])\), and the corresponding update rule is
	\begin{equation}\label{eq:RA_update}
	\matr{X}[n+1] = (1-\alpha)\matr{X}[n] + \alpha\matr{T}\matr{A}[n],
	\end{equation}
	as described further in Section \ref{sec:RA}.\footnote{We use uppercase letters for random variables, e.g., \(\matr{X}[n]\). They are distinguishable from matrices (which are deterministic), e.g., \(\matr{T},\matr{Q}\), since the matrices are not time dependent.} In this model, which we will refer to as the Random Actions model (RA model for short), the probabilities of the actions, rather than the actions/opinions themselves, are updated as a weighted average over the neighbors' actions.
	\section{Contributions}
	We extend the RA model by the introduction of a stubborn agent and establish that the opinion dynamics converges in probability to a consensus even under this restriction, and furthermore that all agents adopt the stubborn agent's initial opinion. While this result is intuitively expected, the proof entails some non-trivial mathematical techniques.
	
	As a stepping stone towards the analysis of the aforementioned model we first consider the DeGroot model with a stubborn agent as described in \cite{wai} and show that the convergence results from that paper can be obtained with weakened assumptions on the model. Specifically, instead of assuming that every ordinary agent has a non-zero trust in the stubborn agent, it suffices to assume that at least one ordinary agent has such a trust. We then use this conclusion in proving the claims regarding the extended RA model. 
	\section{Models and Definitions}\label{sec:model}
	In both models described in this section, we will consider a directed, weighted, single-component network with \(K\) nodes, where the nodes are interpreted as agents. Before giving the details of the models, let us at this point remind the reader of some definitions. A \textit{sub-(row)-stochastic} matrix is a square, non-negative matrix such that the row sums are less than or equal to \(1\). The word ``row'' will be omitted and implied from hereon. There are two special cases of these matrices: A \textit{stochastic} matrix is a sub-stochastic matrix where all rows sum to \(1\), and a \textit{strictly sub-stochastic} matrix is a sub-stochastic matrix whose row sums are all strictly less than \(1\).
	\subsection{The DeGroot Model with a Stubborn Agent}\label{sec:DeGroot}
	In the DeGroot model \cite{degroot}, at every time step \(n\in\{0,1,\dots\}\), each agent \(k\in\{1,2,\dots,K\}\) observes the opinions of its neighbors, and updates its opinion to a linear combination of its own opinion and those of its neighbors. The update rule is given by Equation \eqref{eq:degroot_update} where \(\matr{x}[0]\in\mathbb{R}^K\) is a column vector representing the initial opinions of the \(K\) agents and \(\matr{T}\) is a \(K\times K\) stochastic matrix representing the trusts between agents. If \(\matr{T}^n\) converges to a limit \(\matr{T}^{\infty}\) as \(n\to\infty\), then consensus is reached and is given by
	\begin{equation}
	\lim_{n \to \infty}\matr{x}[n] = \lim_{n \to \infty}\matr{T}^n\matr{x}[0] = \matr{T}^{\infty}\matr{x}[0].
	\end{equation}
	If the agents are viewed as nodes in a network, then \(\matr{T}\) is interpreted as an adjacency matrix with elements \(t_{ij}\), and we use the convention that \(t_{ij}>0\) represents an edge from \(j\) to \(i\) whose weight is equal to the trust that \(i\) puts in \(j\).
	
	A special case is when one agent is stubborn, that is, an agent who never updates its opinion, corresponding to a node with only outgoing edges. Let the agents' opinions be partitioned into two sets of opinions, \(\matr{x}_1[n]\) and \(\matr{y}[n] = (x_{2}[n], x_{3}[n], \dots,x_K[n])^T\), held respectively by a stubborn agent and \(K-1\) ordinary agents. Then we write
	\begin{equation}\label{eq:partition}
	\matr{x}[n] = 
	\begin{pmatrix}
	x_1[n] \\ \matr{y}[n]
	\end{pmatrix}.
	\end{equation}
	In this case the trust matrix \(\matr{T}\) has the structure
	\begin{equation}\label{eq:degroot_trust}
	\matr{T} = \begin{bmatrix}
	1 & \bigzero\\
	\matr{r} & \matr{Q}
	\end{bmatrix},
	\end{equation}
	where the scalar \(1\) represents the stubborn agent, the vector \(\matr{r}\) with dimensions \((K-1)\times 1\) represents the links from stubborn to ordinary agent, and the matrix \(\matr{Q}\) represents the edges between ordinary agents. We will assume that all ordinary agents are strongly connected, so that \(\matr{Q}\) is irreducible.
	
	\subsection{The RA Model with a Stubborn Agent}\label{sec:RA}
	In the RA model \cite{scaglione}, at every time step each agent \(k\in\{1,2,\dots,K\}\) chooses one of two actions, \(0\) or \(1\), and these actions are generated by a Bernoulli random variable \(A_k[n]\) with probability \(X_k[n]\). The update of these probabilities is governed by Equation~\eqref{eq:RA_update}, where \(\alpha\in(0,1)\), \(\matr{T}\) is a trust matrix (as defined in Section \ref{sec:DeGroot}), and \(\matr{A}[n] = (A_1[n], A_2[n], \dots, A_K[n])^T\) are the actions with corresponding probabilities \(\matr{X}[n] = (X_1[n], X_2[n], \dots, X_K[n])^T \in[0,1]^K\), which themselves are stochastic for \(n>0\).
	
	In the case with a stubborn agent we can assume w.l.o.g. that this agent always chooses action \(0\) with probability \(1\). Analogously to equations \eqref{eq:partition} and \eqref{eq:degroot_trust}, we have
	\begin{equation}\label{eq:RA_stubborn}
	\matr{X}[n] = \begin{pmatrix}
	0 \\ \matr{Y}[n] \end{pmatrix}, \ 
	\matr{T} = \begin{bmatrix}
	1 & \bigzero\\
	\matr{r} & \matr{Q}
	\end{bmatrix}, \
	\matr{A}[n] = \begin{pmatrix}
	A_1[n] \\ \matr{B}[n]
	\end{pmatrix}
	\end{equation}
	where \(\matr{r}\) has dimension \((K-1)\times 1\) and \(\matr{B}[n] = (A_2[n], A_3[n], \dots, A_K[n])^T\).
	Then \(A_1[n] = 0\) with probability \(1\) for all \(n\geq 0\) and the other agents will update as in the original RA model. Again we assume that \(\matr{Q}\) is irreducible.
	\section{Results}
	The first proposition establishes the conditions for convergence of the model in Section \ref{sec:DeGroot} with trust matrix \(\matr{T}\) as defined in \eqref{eq:degroot_trust}.
		\begin{proposition}\label{Tconverge}
		If at least one ordinary agent puts a non-zero trust in the stubborn agent, that is, \(t_{i1}>0\) for some \(i>1\), then the limit \(\matr{T}^n, n\to\infty\), exists and has the structure
		\begin{equation*}
		\matr{T}^{\infty} =
		\begin{bmatrix}
		1 & \bigzero \\
		\matr{(I-Q)}^{-1}\matr{r} & \bigzero
		\end{bmatrix}.
		\end{equation*}
		\end{proposition}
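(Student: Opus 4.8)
The plan is to reduce everything to the spectral radius of $\matr{Q}$. First I would record that, since $\matr{T}$ in \eqref{eq:degroot_trust} is stochastic, each row of the block $[\,\matr{r}\ \ \matr{Q}\,]$ sums to $1$; hence $\matr{Q}$ is sub-stochastic, with its $i$-th row sum equal to $1-t_{i1}$. The hypothesis that $t_{i1}>0$ for some $i>1$ therefore says precisely that $\matr{Q}$ is sub-stochastic but \emph{not} stochastic: at least one row sum is strictly below $1$. Combined with the standing assumption that $\matr{Q}$ is irreducible, I claim this forces $\rho(\matr{Q})<1$.

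The heart of the argument is establishing that strict inequality, and I would do it by contradiction using Perron--Frobenius theory. Since $\matr{Q}\ge\bigzero$ is irreducible, $\rho(\matr{Q})$ is an eigenvalue with a strictly positive right eigenvector $\matr{v}>\bigzero$, and $\rho(\matr{Q})\le\max_i\sum_j q_{ij}\le 1$. Suppose $\rho(\matr{Q})=1$, so $\matr{Q}\matr{v}=\matr{v}$. Let $M=\max_j v_j>0$ and $S=\{j: v_j=M\}$. Picking $m\in S$, the chain $M=v_m=\sum_j q_{mj}v_j\le M\sum_j q_{mj}\le M$ forces $\sum_j q_{mj}=1$ and $v_j=M$ whenever $q_{mj}>0$; thus $S$ is closed under taking out-neighbours in the digraph of $\matr{Q}$. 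Irreducibility then gives $S=\{2,\dots,K\}$, i.e.\ $\matr{v}$ is a positive multiple of the all-ones vector and $\matr{Q}\mathbf{1}=\mathbf{1}$ — contradicting that some row sum equals $1-t_{i1}<1$. Hence $\rho(\matr{Q})<1$. (Alternatively one may just invoke the standard fact that an irreducible sub-stochastic matrix has spectral radius $1$ only if it is stochastic.)

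With $\rho(\matr{Q})<1$ in hand the rest is routine. I would note $\matr{Q}^n\to\bigzero$ and that $\matr{I}-\matr{Q}$ is nonsingular with the Neumann series $\sum_{k=0}^{\infty}\matr{Q}^k=(\matr{I}-\matr{Q})^{-1}$ convergent. Then a one-line induction on the block structure of $\matr{T}$ in \eqref{eq:degroot_trust} gives
\begin{equation*}
\matr{T}^n=\begin{bmatrix} 1 & \bigzero\\[2pt] \bigl(\textstyle\sum_{k=0}^{n-1}\matr{Q}^k\bigr)\matr{r} & \matr{Q}^n\end{bmatrix},
\end{equation*}
using $\matr{I}+\matr{Q}\sum_{k=0}^{n-1}\matr{Q}^k=\sum_{k=0}^{n}\matr{Q}^k$ in the inductive step. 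Letting $n\to\infty$ and substituting the two limits above yields exactly the stated form of $\matr{T}^{\infty}$.

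The only genuine obstacle is the spectral-radius step; everything after it (the Neumann series, the block induction, passing to the limit) is bookkeeping. The subtle point there is that irreducibility of $\matr{Q}$ is exactly what lets the ``argmax set'' $S$ propagate to all ordinary agents — without it one could embed a stochastic irreducible piece inside a larger sub-stochastic $\matr{Q}$ and still have $\rho(\matr{Q})=1$, so the hypothesis really is used through both $t_{i1}>0$ and irreducibility together.
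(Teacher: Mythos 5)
Your proof is correct, and it follows the paper's overall architecture exactly: identify the block form of \(\matr{T}^n\), reduce everything to showing \(\rho(\matr{Q})<1\), then pass to the limit via the Neumann series. The one place where you genuinely diverge is the proof of the spectral-radius bound itself, which the paper isolates as Lemma~\ref{lem:spectralradius}. The paper argues via row sums of powers: it shows \(r_m^{(n+1)}=\sum_k a_{mk}^{(n)}r_k^{(1)}\) is non-increasing in \(n\), and that irreducibility lets the one deficient row ``infect'' every other row within \(M\) steps, so that all row sums of \(\matr{A}^M\) are strictly below \(1\); the bound \(\rho(\matr{A})^M=\rho(\matr{A}^M)<1\) then follows from nothing more than the maximum-row-sum estimate for non-negative matrices. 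You instead run a contradiction through Perron--Frobenius: if \(\rho(\matr{Q})=1\) the strictly positive right eigenvector must, by the maximum-principle argument on the argmax set \(S\) and strong connectivity, be constant, forcing \(\matr{Q}\mathbf{1}=\mathbf{1}\) and contradicting the deficient row. Both arguments are sound and both use irreducibility in an essential way (your closing remark about embedding a stochastic irreducible block inside a larger sub-stochastic \(\matr{Q}\) is exactly the right counterexample to keep in mind). Your route leans on a heavier off-the-shelf theorem (existence of the strictly positive Perron vector) but is shorter and arguably more transparent about \emph{why} the result holds; the paper's route is more elementary, needing only the row-sum bound on \(\rho\), and additionally yields the quantitative fact that all row sums of \(\matr{Q}^M\) are strictly less than \(1\) after finitely many steps. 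The remaining bookkeeping (the block induction and the telescoping identity \((\matr{I}-\matr{Q})\sum_{k=0}^{n-1}\matr{Q}^k=\matr{I}-\matr{Q}^n\) versus your direct appeal to the Neumann series) is equivalent in substance.
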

	For the proof of Proposition \ref*{Tconverge} we need the following lemma.
	\begin{lemma}\label{lem:spectralradius}
		Let \(\matr{A}\) be an \(M\times M\) irreducible sub-stochastic matrix with at least one row sum being strictly less than \(1\), and let \(\rho(\matr{A})\) be the spectral radius of \(\matr{A}\). It holds that \(\rho(\matr{A})<1\).
	\end{lemma}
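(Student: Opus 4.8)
The plan is to use the Perron–Frobenius theory for irreducible nonnegative matrices together with the fact that $\rho(\matr{A})$ is sandwiched between the minimum and maximum row sums. Since $\matr{A}$ is nonnegative and irreducible, Perron–Frobenius guarantees that $\rho(\matr{A})$ is an eigenvalue (the Perron root) and that it admits a strictly positive right eigenvector $\matr{v}$, i.e. $\matr{A}\matr{v} = \rho(\matr{A})\matr{v}$ with $v_j > 0$ for every $j$.

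First I would write out the eigenvalue equation componentwise: for each $i$, $\sum_{j=1}^{M} a_{ij} v_j = \rho(\matr{A}) v_i$. Let $v_{\max} = \max_j v_j$ and pick an index $i^\ast$ with $v_{i^\ast} = v_{\max}$. Then
\begin{equation*}
\rho(\matr{A}) v_{\max} = \sum_{j=1}^{M} a_{i^\ast j} v_j \le \Big(\sum_{j=1}^{M} a_{i^\ast j}\Big) v_{\max} \le v_{\max},
\end{equation*}
using sub-stochasticity in the last step; this already gives $\rho(\matr{A}) \le 1$. To get the strict inequality I would argue by contradiction: suppose $\rho(\matr{A}) = 1$. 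Then equality must hold throughout the display above for the maximizing index $i^\ast$, which forces both $\sum_j a_{i^\ast j} = 1$ (so row $i^\ast$ is a ``full'' row) and $a_{i^\ast j}(v_{\max} - v_j) = 0$ for every $j$; hence $v_j = v_{\max}$ for every $j$ with $a_{i^\ast j} > 0$, i.e. every out-neighbor of $i^\ast$ also attains the maximum.

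The key step — and the main obstacle — is then to propagate this ``full row'' and ``maximal value'' property around the whole index set using irreducibility. Because $\matr{A}$ is irreducible, for any index $\ell$ there is a directed path in the graph of $\matr{A}$ from $i^\ast$ to $\ell$; by induction along such a path (repeating the equality analysis at each successive vertex, each of which in turn attains $v_{\max}$ and hence must also have row sum $1$) one concludes that \emph{every} row of $\matr{A}$ sums to $1$. This contradicts the hypothesis that at least one row sum is strictly less than $1$. Therefore $\rho(\matr{A}) < 1$, completing the proof. The only subtlety to handle carefully is making the induction along the directed path rigorous — ensuring at each vertex we have indeed reached an index attaining $v_{\max}$ so that the same equality argument applies — but this is exactly where irreducibility is used and causes no real difficulty.
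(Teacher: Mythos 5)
Your proof is correct, but it takes a genuinely different route from the paper's. You invoke the Perron--Frobenius theorem to obtain a strictly positive eigenvector \(\matr{v}\) for the Perron root, bound \(\rho(\matr{A})\le 1\) by evaluating the eigenvalue equation at a maximizing component, and then run an equality (maximum-principle) analysis: if \(\rho(\matr{A})=1\), the maximizing row must sum to \(1\) and all its out-neighbors must also attain \(v_{\max}\), so irreducibility propagates ``full row sum'' to every index, contradicting the hypothesis. The paper instead avoids Perron--Frobenius entirely: it shows the row sums \(r_m^{(n)}\) of \(\matr{A}^n\) are non-increasing in \(n\), uses irreducibility to show the deficiency of the one short row reaches every row within at most \(M\) steps (so every row sum of \(\matr{A}^M\) is strictly below \(1\)), and concludes via the elementary bound \(\rho(\matr{A})^M=\rho(\matr{A}^M)\le\max_m r_m^{(M)}<1\). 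Both arguments use irreducibility to propagate a ``strictness'' property along directed paths; yours is shorter and is the standard textbook argument, but it rests on the deeper fact that an irreducible nonnegative matrix has a positive Perron eigenvector, whereas the paper's argument needs only the row-sum upper bound on the spectral radius and is thus more self-contained. The one point you flag yourself---making the induction along the path rigorous---is indeed routine: each successive vertex on the path attains \(v_{\max}\) by the previous step, so the same equality analysis applies verbatim.
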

	The proofs of all lemmas in this article are given in the appendix. Note that for a strictly sub-stochastic matrix \(\matr{A}\), we can remove the assumption of irreducibility since it follows directly from Theorem 8.1.22 in \cite{HorJoh} that \(\rho(\matr{A})<1\).
	\begin{proof}[Proof of Proposition \ref{Tconverge}]
		The \(n\)th power of \(\matr{T}\) is
		\begin{equation}\label{eq: tn}
		\matr{T}^n =
		\begin{bmatrix}
		1 & \bigzero \\
		(\matr{I}+\matr{Q}+\matr{Q}^2+\dots +\matr{Q}^{n-1})\matr{r} & \matr{Q}^n
		\end{bmatrix},
		\end{equation}
		where \(\matr{Q}\) is sub-stochastic with at least one row having sum strictly less than \(1\). This is due to the assumption that \(t_{i1}>0\) for some \(i>1\), and since \(\matr{T}\) is stochastic, the \(i\)th row of \(\matr{Q}\) must have sum less than \(1\). Finally, since \(\matr{Q}\) is irreducible, Lemma \ref{lem:spectralradius} applies, and we have \(\rho(\matr{Q})<1\). By Theorem 5.6.12 in \cite{HorJoh}, this implies that
		\begin{equation}\label{Qconverge}
		\lim_{n \to \infty}\matr{Q}^n=\bigzero.
		\end{equation}
		Now, consider
		\begin{equation}\label{partial}
		(\matr{I}-\matr{Q})\sum_{k=0}^{n-1}\matr{Q}^k = \sum_{k=0}^{n-1}\left(\matr{Q}^k - \matr{Q}^{k+1}\right) = \matr{I} - \matr{Q}^{n}.
		\end{equation}
		By Equation \eqref{Qconverge}, the right hand side of \eqref{partial} tends to \(\matr{I}\) in the limit as \(n\to\infty\), and since \(\rho(\matr{Q})<1\) the matrix \(\matr{I}-\matr{Q}\) is invertible.\footnote{To see this, suppose \(\matr{I}-\matr{Q}\) is not invertible. Then there exists a non-zero vector \(\matr{v}\) such that \((\matr{I}-\matr{Q})\matr{v} = \bigzero\), or equivalently \(\matr{Qv} = \matr{v}\), which shows that \(1\) is an eigenvalue of \(\matr{Q}\). But this is impossible since \(\rho(\matr{Q})<1\).}   It follows that
		\begin{equation}
		\lim_{n \to \infty}\sum_{k=0}^{n-1}\matr{Q}^k=(\matr{I}-\matr{Q})^{-1}.
		\end{equation}
	\end{proof}
	With the previously discussed decomposition of \(\matr{x}[n]\) into stubborn and ordinary agents in \eqref{eq:partition}, the opinions of ordinary agents converge as \(n\to\infty\):
	\begin{equation}\label{consensus2}
	\lim_{n \to \infty}\ \matr{y}[n] = (\matr{I}-\matr{Q})^{-1}\matr{r}x_1[0].
	\end{equation}
	
	The second proposition concerns the RA model in Section \ref{sec:RA} with a stubborn agent.
		\begin{proposition}\label{RA-prop}
		The opinion dynamics of \eqref{eq:RA_update} under the restrictions imposed by \eqref{eq:RA_stubborn} leads to herding in the sense of convergence in probability, i.e., for every \(\varepsilon>0\),
		\begin{equation*}
		\lim_{n \to \infty}\mathbb{P}(X_k[n]<1-\varepsilon) = 0, \ \text{for all } k\in\{1,2,\dots,K\}.
		\end{equation*}
		\end{proposition}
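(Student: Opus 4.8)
The plan is to avoid arguing about the random trajectory directly and instead reduce the statement to the convergence of the \emph{expected} opinion vector, which will follow from Proposition~\ref{Tconverge} applied to a suitably averaged matrix. To set this up I would take $\mathcal F_n=\sigma(\matr{A}[0],\dots,\matr{A}[n-1])$ to be the natural filtration. By \eqref{eq:RA_update} and induction $\matr{X}[n]$ is $\mathcal F_n$-measurable, and by the definition of the model $\E[\matr{A}[n]\mid\mathcal F_n]=\matr{X}[n]$, so taking conditional expectations in \eqref{eq:RA_update} gives $\E[\matr{X}[n+1]\mid\mathcal F_n]=\matr{P}\,\matr{X}[n]$ with $\matr{P}:=(1-\alpha)\matr{I}+\alpha\matr{T}$. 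By the tower property and induction, $\E[\matr{X}[n]]=\matr{P}^{\,n}\matr{X}[0]$.

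Next I would observe that $\matr{P}$ has exactly the block form \eqref{eq:degroot_trust}: a $1$ in the top-left entry (since $(1-\alpha)+\alpha=1$), the column $\alpha\matr{r}$, and the ordinary-agent block $\tilde{\matr{Q}}:=(1-\alpha)\matr{I}+\alpha\matr{Q}$. Moreover the hypotheses of Proposition~\ref{Tconverge} are inherited by $\matr{P}$: it is stochastic; $\tilde{\matr{Q}}$ is irreducible because it shares the off-diagonal support of the irreducible matrix $\matr{Q}$; and since some ordinary agent trusts the stubborn agent, the corresponding entry of $\alpha\matr{r}$ is positive, forcing some row of $\tilde{\matr{Q}}$ to sum to strictly less than $1$. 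Proposition~\ref{Tconverge} then gives
\[
\matr{P}^{\,n}\;\longrightarrow\;\begin{bmatrix}1 & \bigzero\\ (\matr{I}-\tilde{\matr{Q}})^{-1}\alpha\matr{r} & \bigzero\end{bmatrix}.
\]
Here $(\matr{I}-\tilde{\matr{Q}})^{-1}\alpha\matr{r}=\bigl(\alpha(\matr{I}-\matr{Q})\bigr)^{-1}\alpha\matr{r}=(\matr{I}-\matr{Q})^{-1}\matr{r}$, and since $\matr{T}$ is stochastic one has $\matr{r}=(\matr{I}-\matr{Q})\mathbf{1}$, so this column equals $\mathbf{1}$. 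Feeding in the deterministic initial vector $\matr{X}[0]$, whose leading entry is the stubborn agent's opinion (normalised to $1$), we obtain $\E[X_k[n]]\to 1$ for every $k$.

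It remains to upgrade this to convergence in probability. Because $\matr{T}$ is stochastic and $\matr{A}[n]\in\{0,1\}^K$, the recursion \eqref{eq:RA_update} keeps each $X_k[n]$ inside $[0,1]$; hence $1-X_k[n]\ge 0$ with $\E[1-X_k[n]]\to 0$, and Markov's inequality gives, for every $\varepsilon>0$,
\[
\mathbb{P}\bigl(X_k[n]<1-\varepsilon\bigr)=\mathbb{P}\bigl(1-X_k[n]>\varepsilon\bigr)\le\frac{\E[1-X_k[n]]}{\varepsilon}\;\longrightarrow\;0,
\]
which is the claim (the case $k=1$ being trivial, as $X_1[n]$ is constant).

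The step I expect to be the real obstacle is not the probabilistic part but the reduction: one must recognise that Proposition~\ref{Tconverge} has to be invoked for the averaged matrix $\matr{P}=(1-\alpha)\matr{I}+\alpha\matr{T}$ rather than for $\matr{T}$ itself, and then carefully verify that stochasticity, irreducibility of the ordinary-agent block, and the ``at least one strictly sub-stochastic row'' condition all survive the averaging (alternatively, one can bypass Proposition~\ref{Tconverge} and argue directly that $\rho\bigl((1-\alpha)\matr{I}+\alpha\matr{Q}\bigr)<1$ by Lemma~\ref{lem:spectralradius}). Once $\E[X_k[n]]\to 1$ is in hand the genuinely stochastic content is just the one-line Markov bound above; this is also why the conclusion can only be stated as convergence in probability, since the actions $\matr{A}[n]$ keep fluctuating and there is no almost-sure limit to latch onto. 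Finally, the short identity $(\matr{I}-\matr{Q})^{-1}\matr{r}=\mathbf{1}$ is precisely what makes the common limit the stubborn agent's opinion rather than some other convex combination of the initial opinions.
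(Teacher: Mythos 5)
Your proof is correct, and it takes a genuinely different and considerably more elementary route than the paper. The paper never works with the first moment of \(\matr{X}[n]\) directly: it forms the scalar \(S[n]=\matr{\psi}^T\matr{Y}[n]\) with \(\matr{\psi}\) the Perron left eigenvector of \(\matr{Q}\), shows that \(S[n]\) is a non-negative strict supermartingale and hence converges a.s., proves that the conditional variance of the increments vanishes in mean square so that \(\E[(Y_k[n](1-Y_k[n]))^2]\to 0\), invokes Lemma \ref{lem:ms-implies-prob-conv} to conclude that each opinion polarizes toward \(\{0,1\}\) in probability, and finally propagates the stubborn agent's value through the network via Lemma \ref{lem:conv-prob}. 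Your reduction --- \(\E[\matr{X}[n]]=\matr{P}^n\matr{X}[0]\) with \(\matr{P}=(1-\alpha)\matr{I}+\alpha\matr{T}\), the check that \(\matr{P}\) inherits the hypotheses of Proposition \ref{Tconverge}, the identity \((\matr{I}-\matr{Q})^{-1}\matr{r}=\mathbf{1}\), and one Markov bound --- bypasses Lemmas \ref{lem:ms-implies-prob-conv} and \ref{lem:conv-prob} and the entire martingale apparatus, and all of its steps check out. The point you correctly isolate is that the limiting mean is an \emph{endpoint} of \([0,1]\), so convergence of the mean upgrades to convergence in probability for free; this is exactly what the stubborn agent buys, and it is what fails in the original RA model (where the consensus mean is interior), which is presumably why the paper follows the martingale route inherited from \cite{scaglione}. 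Two remarks. First, there is a normalization mismatch in the paper itself: \eqref{eq:RA_stubborn} pins the stubborn agent at \(0\) and the paper's proof concludes \(Y_k[n]\xrightarrow{P}0\), whereas the displayed proposition asserts convergence to \(1\); your choice to place the stubborn agent at \(1\) is the one that matches the displayed statement, and your argument is symmetric under \(X\mapsto 1-X\) in any case. Second, your closing claim that the result ``can only be stated as convergence in probability'' undersells your own estimate: since \(\mathbf{1}-\E[\matr{Y}[n]]=\tilde{\matr{Q}}^n(\mathbf{1}-\matr{Y}[0])\) with \(\tilde{\matr{Q}}=(1-\alpha)\matr{I}+\alpha\matr{Q}\) and \(\rho(\tilde{\matr{Q}})<1\), the quantities \(\E[1-Y_k[n]]\) are summable, so \(\sum_n(1-Y_k[n])<\infty\) a.s.\ and the convergence is in fact almost sure --- strictly more than the paper establishes.
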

The first part of the proof of Proposition \ref{RA-prop} treats the convergence of opinions towards a consensus in the subnetwork induced by the ordinary agents and follows partly the proof of Theorem 1 in \cite{scaglione}, but with some modifications due to the presence of the stubborn agent. The second part shows that the consensus opinion must be equal to that of the stubborn agent. In this part we deviate from \cite{scaglione} in that we show convergence in probability, as opposed to the claimed proof of almost sure convergence therein, which we have been unable to verify. A detailed discussion of the differences will be provided elsewhere. We need the following facts for the main proof.
	\begin{lemma}\label{lem:ms-implies-prob-conv}
		If \(\{W[n]\}_{n=0}^\infty\) is a sequence of random variables such that \(W[n]\in [0,1]\) for all \(n\geq 0\), and
		\[\lim_{n \to \infty}\E[W[n]^2(1-W[n])^2] = 0,\]
		then for all \(\varepsilon > 0\),
		\[\lim_{n\to\infty}\mathbb{P}(W[n]\leq\varepsilon \ \cup \ W[n]\geq 1-\varepsilon) = 1.\]
	\end{lemma}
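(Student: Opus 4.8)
The plan is to obtain the statement from Markov's inequality applied to the bounded nonnegative random variable $W[n]^2(1-W[n])^2$. First I would dispose of the degenerate range $\varepsilon \ge 1/2$: there the open interval $(\varepsilon,1-\varepsilon)$ is empty, so $\{W[n]\le\varepsilon\}\cup\{W[n]\ge 1-\varepsilon\}$ is the whole sample space and the claimed probability equals $1$ for every $n$. Hence assume $0<\varepsilon<1/2$ from now on. Since $W[n]\in[0,1]$, the complement of the event $\{\varepsilon < W[n] < 1-\varepsilon\}$ is exactly $\{W[n]\le\varepsilon\}\cup\{W[n]\ge 1-\varepsilon\}$, so it suffices to prove that $\mathbb{P}(\varepsilon < W[n] < 1-\varepsilon)\to 0$.

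The key step is a pointwise bound. On the event $\{\varepsilon < W[n] < 1-\varepsilon\}$ we have simultaneously $W[n]>\varepsilon$ and $1-W[n]>\varepsilon$, hence $W[n]^2(1-W[n])^2 > \varepsilon^4$; equivalently
\[
\{\varepsilon < W[n] < 1-\varepsilon\}\ \subseteq\ \{\, W[n]^2(1-W[n])^2 > \varepsilon^4 \,\}.
\]
Writing $Z[n] := W[n]^2(1-W[n])^2$, which takes values in $[0,1/16]$ so that all its expectations are finite, Markov's inequality gives, for every $n$,
\[
\mathbb{P}\big(\varepsilon < W[n] < 1-\varepsilon\big)\ \le\ \mathbb{P}\big(Z[n] > \varepsilon^4\big)\ \le\ \frac{\E[Z[n]]}{\varepsilon^4}\ =\ \frac{\E\big[W[n]^2(1-W[n])^2\big]}{\varepsilon^4}.
\]
By hypothesis the numerator tends to $0$ as $n\to\infty$ and $\varepsilon$ is a fixed positive constant, so the right-hand side tends to $0$. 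Consequently $\mathbb{P}(W[n]\le\varepsilon\ \cup\ W[n]\ge 1-\varepsilon) \ge 1 - \E[W[n]^2(1-W[n])^2]/\varepsilon^4 \to 1$, and since probabilities never exceed $1$ the limit equals $1$, as claimed.

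I do not expect a genuine obstacle here; this is essentially a one-line second-moment/Markov estimate. The only points that need a little care are the degenerate case $\varepsilon\ge 1/2$ and getting the direction of the inequalities right when passing to complements, both of which are handled above.
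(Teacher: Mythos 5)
Your proof is correct, and while it rests on the same underlying observation as the paper's --- namely that the test function $w^2(1-w)^2$ is bounded below by a positive constant on the middle interval $(\varepsilon,1-\varepsilon)$ --- the packaging differs in a way worth noting. The paper writes $\E[W[n]^2(1-W[n])^2]$ as $\int_0^1 w^2(1-w)^2 f_{W[n]}(w)\,\mathrm{d}w$, splits the integral into the three pieces over $[0,\varepsilon]$, $[\varepsilon,1-\varepsilon]$, $[1-\varepsilon,1]$, discards the outer two, and bounds the middle piece below by $\varepsilon^2(1-\varepsilon)^2\,\mathbb{P}(\varepsilon<W[n]<1-\varepsilon)$; the conclusion then follows by letting the auxiliary parameter $\mu$ be arbitrary. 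You instead run the inclusion $\{\varepsilon<W[n]<1-\varepsilon\}\subseteq\{W[n]^2(1-W[n])^2>\varepsilon^4\}$ through Markov's inequality. The two estimates are the same up to the constant ($\varepsilon^2(1-\varepsilon)^2$ versus the slightly cruder $\varepsilon^4$; either suffices since only positivity matters), but your route has a concrete advantage: it makes no reference to a probability density $f_{W[n]}$, whereas the paper's proof implicitly assumes one exists. In the application the $Y_k[n]$ are built from Bernoulli actions via affine recursions and are supported on countable sets, so they do not admit densities in the usual sense; Markov's inequality sidesteps this entirely. Your explicit treatment of the degenerate range $\varepsilon\ge 1/2$ is also a small point of care the paper omits.
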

	\begin{lemma}\label{lem:conv-prob}
		Consider the update rule in \eqref{eq:RA_update} with \(\matr{X}[n] = (X_1[n], X_2[n], \dots, X_K[n])^T\). Suppose agent \(i\) puts trust in agent \(j\) (so that \(t_{ij}>0\)). If \(X_j[n]\xrightarrow{P}0\) and
		\[\lim_{n \to \infty}\E[X_i[n]^2(1-X_i[n])^2] = 0,\] 
		then \(X_i[n]\xrightarrow{P}0\).
	\end{lemma}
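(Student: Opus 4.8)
The plan is to use Lemma~\ref{lem:ms-implies-prob-conv} to show that the law of $X_i[n]$ asymptotically concentrates on $\{0,1\}$, and then to eliminate the value $1$ by exploiting the fact that whenever agent $j$ plays action $0$ — an event of probability tending to $1$, since $X_j[n]\xrightarrow{P}0$ — agent $i$'s opinion is driven strictly below $1$ in a single update step.

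First, since $X_i[n]\in[0,1]$ and $\E[X_i[n]^2(1-X_i[n])^2]\to 0$ by hypothesis, Lemma~\ref{lem:ms-implies-prob-conv} applied with $W[n]=X_i[n]$ gives $\mathbb{P}(\varepsilon<X_i[n]<1-\varepsilon)\to 0$ for every $\varepsilon\in(0,\tfrac12)$, so it remains only to control the probability that $X_i[n]$ is close to $1$. To that end I would fix a threshold $\delta$ with $0<\delta<\min\bigl(\alpha t_{ij},\,\tfrac{1-\alpha}{2-\alpha}\bigr)$ and bound $\mathbb{P}(X_i[n+1]\ge 1-\delta)$ by conditioning on $X_i[n]$. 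Writing the $i$th component of~\eqref{eq:RA_update} as $X_i[n+1]=(1-\alpha)X_i[n]+\alpha\sum_k t_{ik}A_k[n]$ and using $A_k[n]\in\{0,1\}$, $X_i[n]\le 1$, and $\sum_k t_{ik}=1$ (the trust matrix is row-stochastic), one checks that: (i) if $X_i[n]\le\delta$ then $X_i[n+1]\le(1-\alpha)\delta+\alpha<1-\delta$; and (ii) if $X_i[n]\ge 1-\delta$ and $A_j[n]=0$ then the $j$th term drops out and $X_i[n+1]\le(1-\alpha)+\alpha(1-t_{ij})=1-\alpha t_{ij}<1-\delta$. Consequently $\{X_i[n+1]\ge 1-\delta\}\subseteq\{\delta<X_i[n]<1-\delta\}\cup\{A_j[n]=1\}$, and a union bound yields
\[
\mathbb{P}(X_i[n+1]\ge 1-\delta)\le \mathbb{P}(\delta<X_i[n]<1-\delta)+\mathbb{P}(A_j[n]=1).
\]
The first term on the right vanishes by the previous paragraph; for the second, $A_j[n]$ is conditionally $\mathrm{Bernoulli}(X_j[n])$, so $\mathbb{P}(A_j[n]=1)=\E[X_j[n]]$, and since $X_j[n]\in[0,1]$ and $X_j[n]\xrightarrow{P}0$ this expectation tends to $0$. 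Hence $\mathbb{P}(X_i[n]\ge 1-\delta)\to 0$.

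Finally, combining the two facts, $\mathbb{P}(X_i[n]>\delta)=\mathbb{P}(\delta<X_i[n]<1-\delta)+\mathbb{P}(X_i[n]\ge 1-\delta)\to 0$, so $\E[X_i[n]]\le\delta+\mathbb{P}(X_i[n]>\delta)$ gives $\limsup_n\E[X_i[n]]\le\delta$; since $\delta$ can be chosen arbitrarily small this forces $\E[X_i[n]]\to 0$, and Markov's inequality then yields $\mathbb{P}(X_i[n]>\varepsilon)\to 0$ for every $\varepsilon>0$, i.e.\ $X_i[n]\xrightarrow{P}0$. I expect the main obstacle to be the one-step estimate~(ii): the key point is that a \emph{single} realization $A_j[n]=0$ already forces $X_i[n+1]\le 1-\alpha t_{ij}$, so that sustaining an opinion near $1$ requires the rare event $\{A_j[n]=1\}$, whose probability equals the mean of the already-controlled quantity $X_j[n]$; everything else is routine bookkeeping with $\varepsilon$'s on top of Lemma~\ref{lem:ms-implies-prob-conv}.
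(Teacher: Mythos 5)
Your proof is correct and follows essentially the same route as the paper's: Lemma~\ref{lem:ms-implies-prob-conv} disposes of the middle range $(\varepsilon,1-\varepsilon)$, and the mass near $1$ is killed by observing that $X_i[n+1]\ge 1-\varepsilon$ with $\varepsilon<\alpha t_{ij}$ forces $A_j[n]=1$, an event of probability $\E[X_j[n]]\to 0$. The only differences are cosmetic — you phrase the key step as an event inclusion plus a union bound (and add a case split on $X_i[n]\le\delta$ that is actually unnecessary, since your bound in case (ii) uses only $X_i[n]\le 1$) where the paper conditions on $\matr{X}[n]$ and integrates, and you take a small detour through $\E[X_i[n]]\to 0$ and Markov's inequality at the end.
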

	\begin{proof}[Proof of Proposition \ref{RA-prop}]
	 Let \(\matr{Y}[n]\), \(\matr{B}[n]\), \(\matr{r}\) and \(\matr{Q}\) be defined as in \eqref{eq:RA_stubborn}. Since the vector \(\matr{r}\) has at least one positive element, \(\matr{Q}\) is sub-stochastic with at least one row sum strictly less than one, so by Lemma \ref{lem:spectralradius} it has a largest eigenvalue \(\lambda\in(0,1)\) with corresponding left eigenvector \(\matr{\psi}\), \(\matr{\psi}^T\matr{Q} = \lambda\matr{\psi^T}\). Let \(S[n] = \matr{\psi}^T\matr{Y}[n]\). The proof will proceed as follows: First we show that \(S[n]\) is a strict super-martingale that converges in the limit as \(n\to\infty\) to a random variable \(S[\infty]\). Then we show that the conditional variance of the martingale difference sequence \(S[n]-S[n-1]\) converges to zero in the mean square sense. We conclude that all elements in \(\matr{Y[n]}\) converge in probability to the value of the stubborn agent, \(X_1[0] = 0\).
	 
	 We will now show that \(S[n]\) is a strict super-martingale w.r.t. \(\matr{Y}[n]\), that is, \(\E[S[n+1]\mid \matr{Y}[n]] < S[n]\). First, note that by the update rule in Equation \eqref{eq:RA_update},
	\begin{equation}\label{eq:ordinary_update}
	\matr{X}[n+1] = \begin{bmatrix}0 \\ \matr{Y}[n+1] \end{bmatrix} = (1-\alpha)\begin{bmatrix}0 \\ \matr{Y}[n] \end{bmatrix} + \alpha \matr{T}\begin{bmatrix}0 \\ \matr{B}[n] \end{bmatrix}.
	\end{equation}
 	Then we have
	\begin{equation}
	S[n+1] = \matr{\psi}^T\matr{Y}[n+1] = \matr{\psi}^T\big((1-\alpha)\matr{Y}[n] + \alpha\matr{Q}\matr{B}[n]\big),
	\end{equation} 
	and by taking expectations of both sides conditioned on \(\matr{Y}[n]\) we obtain
	\begin{equation}\label{eq:supmg-prop}
	\begin{aligned}
	\E[S[n+1]|\matr{Y}[n]] &= (1-\alpha)\matr{\psi}^T\matr{Y}[n] + \alpha\lambda\matr{\psi}^T\matr{Y}[n]\\
	&= (1-\alpha(1-\lambda))S[n] < S[n],
	\end{aligned}
	\end{equation}
	since \((1-\lambda) \in (0,1)\) and \(\alpha\in(0,1)\). Thus \(S[n]\) is a strict super-martingale, and since \(S[n] \geq 0\) for all \(n\) it follows from the Martingale Convergence Theorem \cite[Theorem 4.2.12]{Durrett} that
	\begin{equation}\label{eq:supmg-as-conv}
	S[n] \xrightarrow{a.s.} S[\infty],
	\end{equation}
	for some random variable \(S[\infty]\) as \(n\to\infty\).
	
	Consider now the martingale difference sequence \(\mathrm{\Delta} S[n] = S[n] - S[n-1]\) for \(n>1\). First note that the almost sure convergence of \(S[n]\) in Equation \eqref{eq:supmg-as-conv} implies
	\begin{equation}\label{eq:ds-as-conv}
	\mathrm{\Delta}S[n] \xrightarrow{a.s.} 0, \ n\to\infty.
	\end{equation}
	Furthermore, \(\matr{Q}\) is irreducible and non-negative, so by the Perron-Frobenius Theorem  \cite[Theorem 8.4.4]{HorJoh} all elements of \(\matr{\psi}\) are positive. Let \(\matr{\psi}\) be normalized so that \(\mathbf{1}^T\matr{\psi} = 1\), where \(\mathbf{1}^T = (1,1,\dots,1)\). Since \(Y_k[n]\in[0,1], \ k=1,2,\dots,K-1\), for all \(n\geq 0\) we then have \(S[n] \leq1\) and \(\lvert \mathrm{\Delta}S[n]\rvert = \lvert \matr{\psi}^T(\matr{Y}[n]-\matr{Y}[n-1])\rvert \leq 1\).
	Therefore, by the Dominated Convergence Theorem \cite[Theorem 1.5.8]{Durrett} together with the almost sure convergence in \eqref{eq:ds-as-conv}, \(\mathrm{\Delta}S[n]\) converges to \(0\) in \(m\)th mean, i.e.,
	\begin{equation}\label{m.s.con}
	\lim_{n \to \infty}\E[\lvert\mathrm{\Delta} S[n]\rvert^m] = 0, \ \text{for all } m\geq 1.
	\end{equation}
	We will now show that the variance of \(\mathrm{\Delta}S[n+1]\) conditioned on \(\matr{Y}[n]\) converges to zero in the mean square sense as \(n\to\infty\), and then conclude that the elements of \(\matr{Y}[n]\) converge in probability to all \(0\)s or all \(1\)s. We have:
	\begin{equation}\label{eq:ds-var-cond}
	\begin{aligned}
	&\Var(\mathrm{\Delta}S[n+1]\mid\matr{Y}[n]) \\
	&= \E\big[\big(\mathrm{\Delta}S[n+1] - \E[\mathrm{\Delta}S[n+1]\mid\matr{Y}[n]]\big)^2 \mid\matr{Y}[n]\big] \\
	&= \E\big[\big(\matr{\psi}^T(\matr{Y}[n+1] - \matr{Y}[n]) - \\
	& \ \ \matr{\psi}^T\E[\matr{Y}[n+1] - \matr{Y}[n]\mid\matr{Y}[n]]\big)^2 \mid\matr{Y}[n]\big] \\
	&= \E\big[\big(\matr{\psi}^T\matr{Y}[n+1] - \matr{\psi}^T\E[\matr{Y}[n+1]\mid\matr{Y}[n]]\big)^2 \mid\matr{Y}[n]\big] \\
	&= \E\big[\big(\matr{\psi}^T((1-\alpha)\matr{Y}[n] + \alpha \matr{Q}\matr{B}[n]) \\
	& \ \ - \matr{\psi}^T((1-\alpha)\matr{Y}[n] + \alpha \matr{Q}\matr{Y}[n])\big)^2 \mid\matr{Y}[n]\big] \\
	&= \E\big[\big(\alpha \matr{\psi}^T\matr{Q}(\matr{B}[n] - \matr{Y}[n])\big)^2 \mid\matr{Y}[n]\big] \\
	&= \alpha^2\lambda^2\matr{\psi}^T\E\big[(\matr{B}[n] - \matr{Y}[n])(\matr{B}[n] - \matr{Y}[n])^T \mid\matr{Y}[n]\big]\matr{\psi},
	\end{aligned}
	\end{equation}
	where in the last step we used that \(\matr{\psi}\) is a left eigenvector to \(\matr{Q}\) with eigenvalue \(\lambda\). The actions \(\matr{B}[n]\sim\textrm{Bernoulli}(\matr{Y}[n])\) are statistically independent conditioned on \(\matr{Y}[n]\), so only the diagonal elements of the covariance matrix \(\E\big[(\matr{B}[n] - \matr{Y}[n])\cdot(\matr{B}[n] - \matr{Y}[n])^T \mid\matr{Y}[n]\big]\) are non-zero. They can be expressed explicitly as
	\begin{equation}
	\begin{aligned}
	&\E\big[B_k^2[n]\mid\matr{Y}[n]\big] - \big(\E\big[B_k[n]\mid\matr{Y}[n]\big]\big)^2\\
	&=Y_k[n] - Y_k[n]^2\\
	&= Y_k[n](1-Y_k[n]), \ \text{for all } k=1,2,\dots,K-1.
	\end{aligned}
	\end{equation}
	Therefore,
	\begin{equation}\label{eq:ds-var-cond2}
	\begin{aligned}
	&\Var(\mathrm{\Delta}S[n+1]\mid\matr{Y}[n]) \\
	&= \alpha^2\lambda^2\sum_{k=1}^{K-1}\psi_k^2Y_k[n](1-Y_k[n]).
	\end{aligned}
	\end{equation}
	To see that the left hand side of \eqref{eq:ds-var-cond2} converges to zero in the mean square sense, consider its square:
	\begin{equation}\label{eq:square-var-ds}
	\begin{aligned}
	&(\Var(\mathrm{\Delta} S[n+1]\mid \matr{Y}[n]))^2 \\
	&= \big(\E[(\mathrm{\Delta} S[n+1])^2\mid \matr{Y}[n]] - (\E[\mathrm{\Delta}S[n+1]\mid\matr{Y}[n]])^2\big)^2 \\
	&= \big(\E[(\mathrm{\Delta} S[n+1])^2\mid \matr{Y}[n]]\big)^2 +\big(\E[\mathrm{\Delta}S[n+1]\mid\matr{Y}[n]]\big)^4 \\
	& \ \ \  - 2\E[(\mathrm{\Delta} S[n+1])^2\mid \matr{Y}[n]]\big(\E[\mathrm{\Delta}S[n+1]\mid\matr{Y}[n]]\big)^2\\
	&\leq \big(\E[(\mathrm{\Delta} S[n+1])^2\mid \matr{Y}[n]]\big)^2 +\big(\E[\mathrm{\Delta}S[n+1]\mid\matr{Y}[n]]\big)^4\\
	&\leq \E[(\mathrm{\Delta} S[n+1])^4\mid \matr{Y}[n]] +\E[(\mathrm{\Delta}S[n+1])^4\mid\matr{Y}[n]] \\
	&= 2\E[(\mathrm{\Delta}S[n+1])^4\mid\matr{Y}[n]],
	\end{aligned}
	\end{equation}
	where the first inequality holds since \((\mathrm{\Delta}S[n+1])^2\) is non-negative, and the second inequality is due to Jensen's inequality \cite[Theorem 1.6.2]{Durrett}. By taking expectations on both sides of \eqref{eq:square-var-ds} and using the result of convergence in \(m\)th mean in \eqref{m.s.con}, we obtain
	\begin{equation}
	\begin{aligned}
	& \lim_{n \to \infty}\E[(\Var(\mathrm{\Delta} S[n+1]\mid \matr{Y}[n]))^2] \\
	&\leq 2 \lim_{n \to \infty} \E\big[\E[(\mathrm{\Delta}S[n+1])^4\mid\matr{Y}[n]]\big] \\
	&= 2 \lim_{n \to \infty}\E[(\mathrm{\Delta}S[n+1])^4] = 0.
	\end{aligned}
	\end{equation}
	As already noted, all elements of \(\matr{\psi}\) are positive which, in view of Equation \eqref{eq:ds-var-cond2} together with the mean square convergence just proved, means that
	\begin{equation}
	\lim_{n \to \infty} \E[(Y_k[n](1-Y_k[n]))^2] = 0, \ \text{for all } k = 1,2,\dots,K-1.
	\end{equation}
	By Lemma \ref{lem:ms-implies-prob-conv} this implies that for all \(Y_k[n], k=1,2,\dots,K-1\) and for all \(\varepsilon>0\), we have
	\begin{equation}\label{eq:ordinary-conv1}
	\lim_{n\to\infty}\mathbb{P}(Y_k[n]<\varepsilon \ \cup \ Y_k[n]>1-\varepsilon) = 1.
	\end{equation}
	
   Let the set of ordinary agents be denoted by \(\mathcal{O}\), and define \(V_0\) as the subset of ordinary agents who put a trust in the stubborn agent, i.e., \(V_0 = \{i\in\mathcal{O}\mid t_{i1}>0\} \subseteq \mathcal{O}\); let \(V_1 \subseteq \mathcal{O}\setminus V_0\) denote the set of ordinary agents who put a trust in at least one of the agents in \(V_0\), and so on. Then by Lemma \ref{lem:conv-prob} together with \eqref{eq:ordinary-conv1} it follows that the elements in \(\{Y_k[n]\mid k\in V_0\}\) must converge in probability to \(0\). Consequently, the elements in \(\{Y_k[n]\mid k\in V_1\}\) must again converge to \(0\). Since \(\matr{Q}\) is irreducible there is some index \(P\) such that the union of the disjoint sets \(V_1,V_2,\dots,V_P\) makes up the set of ordinary agents, i.e.,
	\begin{equation}
	\bigcup_{p=1}^P V_p = \mathcal{O}.
	\end{equation}
	By continuing in this fashion it therefore follows that all elements in \(\{Y_k[n]\mid k\in\mathcal{O}\}\) must converge in probability to the value of the stubborn agent, \(X_1[n]=0\).	
\end{proof}
\appendix
\section{Proofs of Lemmas}
\begin{proof}[Proof of Lemma \ref{lem:spectralradius}]
	Let \(\mathbf{1} = (1,1,\dots,1)^T\) and for any \(m, \ 1\leq m\leq M,\) let \(r_m^{(n)} = [\matr{A}^n\mathbf{1}]_m\) be the \(m\)-th row sum of \(\matr{A}^n = \{a_{ij}^{(n)}\}\). Since \(\matr{A}\) is sub-stochastic we have that \(0\leq r_m^{(1)} \leq 1\) for all \(m\), and further that for any \(n\geq1\),
	\begin{equation}\label{eq:recursion}
	\begin{aligned}
		&r_m^{(n+1)} = \sum_{j=1}^{M}a_{mj}^{(n+1)} = \sum_{j=1}^{M}\left(\sum_{k=1}^{M}a_{mk}^{(n)}a_{kj}\right)\\
		&=\sum_{k=1}^{M}\left(a_{mk}^{(n)}\sum_{j=1}^{M}a_{kj}\right) = 
		\sum_{k=1}^{M}a_{mk}^{(n)}r_k^{(1)}.
	\end{aligned}
	\end{equation}
	Therefore
		\begin{equation}\label{eq:non-incr}
	\begin{aligned}
	&r_m^{(n+1)} \leq \sum_{k=1}^{M}a_{mk}^{(n)} = r_m^{(n)},
	\end{aligned}
	\end{equation}
	so the row sums are non-increasing with powers of \(\matr{A}\). By assumption at least one row sum is strictly less than \(1\), so w.l.o.g. we can assume that the rows of \(\matr{A}\)  are ordered such that this applies to the first row sum, i.e., \(r_1^{(1)} < 1\). By the irreducibility of \(\matr{A}\), for any \(m\) there is a positive integer \(l_m\) such that \(a_{m1}^{(l_m)}>0\) (since the induced network is strongly connected). In fact, if \(m\not= 1\) we have \(l_m<M\) (take the shortest path from node \(m\) to node \(1\)). By using \eqref{eq:recursion} we therefore obtain, for any row \(m\),
	\begin{equation}
	\begin{aligned}
	&r_m^{(l_m+1)} = \sum_{j=1}^{M}a_{mj}^{(l_m)}r_j^{(1)} = \sum_{j=2}^{M}a_{mj}^{(l_m)}r_j^{(1)} + a_{m1}^{(l_m)}r_1^{(1)} \\
	&\leq \sum_{j=2}^{M}a_{mj}^{(l_m)} + a_{m1}^{(l_m)}r_1^{(1)} < \sum_{j=1}^{M}a_{mj}^{(l_m)} = r_m^{(l_m)},
	\end{aligned}
	\end{equation}
	which together with \eqref{eq:non-incr} shows that every row sum of \(\matr{A}^n\) is strictly less than \(1\) for all \(n\geq M\). By Theorem 8.1.22 in \cite{HorJoh}, the spectral radius of a non-negative matrix is bounded from above by the maximum row sum.
	This means that \(\rho(\matr{A}^M) <1\), and since \(\rho(\matr{A}^M) = \rho(\matr{A})^M\), we therefore obtain \(\rho(\matr{A})<1\).
\end{proof}
\begin{proof}[Proof of Lemma \ref{lem:ms-implies-prob-conv}]
	Let \(\mu > 0\), and set \(\gamma = \mu\varepsilon^2(1-\varepsilon)^2\). We know that
	\begin{equation}
	\begin{aligned}
	&\lim_{n \to \infty}\E[W[n]^2 (1-W[n])^2]\\ 
	= &\lim_{n \to \infty}\int_0^1 w^2 (1-w)^2 f_{W[n]}(w)\mathrm{d}w = 0,
	\end{aligned}
	\end{equation}
	where \(f_{W[n]}(w)\) is the probability density function of \(W[n]\). Thus there exists \(N>0\) such that  \(A+B+C < \gamma\) for \(n\geq N\), where
	\begin{equation}
	\begin{aligned}
	&A=\int_0^\varepsilon w^2(1-w)^2 f_{W[n]}(w) \mathrm{d}w,\\
	&B=\int_\varepsilon^{1-\varepsilon} w^2 (1-w)^2 f_{W[n]}(w) \mathrm{d}w,\\
	&C=\int_{1-\varepsilon}^1 w^2(1-w)^2 f_{W[n]}(w) \mathrm{d}w.
	\end{aligned}
	\end{equation}
	But \(A>0\) and \(C>0\), so \(B< \gamma\) for all \(n\geq N\), and 
	\begin{equation}
	\gamma > B \geq \varepsilon^2 (1-\varepsilon)^2  \int_\varepsilon^{1-\varepsilon} f_{W[n]}(w)\mathrm{d}w,
	\end{equation}
	which implies
	\begin{equation}\label{eq:stay-out-of-middle-conv}
	\int_\varepsilon^{1-\varepsilon} f_{W[n]}(w)\mathrm{d}w \leq \frac{\gamma}{\varepsilon^2(1-\varepsilon)^2} = \mu.
	\end{equation}
	Since \eqref{eq:stay-out-of-middle-conv} holds for all \(\mu > 0\) and \(\varepsilon > 0\), we have
	\begin{equation}
	\lim_{n\to\infty}\mathbb{P}(\varepsilon < W[n] < 1 -\varepsilon) = 0,
	\end{equation}
	or equivalently,	
	\begin{equation}\label{eq:ordinary_conv2}
	\lim_{n\to\infty}\mathbb{P}(W[n]\leq\varepsilon \ \cup \ W[n]\geq 1-\varepsilon) = 1.
	\end{equation}
\end{proof}
	\begin{proof}[Proof of Lemma \ref{lem:conv-prob}]
		We know from Lemma \ref{lem:ms-implies-prob-conv} that for all \(\varepsilon > 0\) and \(\delta>0\) there exists \(N_1>0\) such that for all \(n\geq N_1\),
		\begin{equation}\label{eq:yi-prob-zero-or-one}
		\mathbb{P}(\varepsilon < X_i[n+1] < 1-\varepsilon) < \dfrac{\delta}{2}.
		\end{equation}
		The assumption that \(X_j[n] \xrightarrow{P}0\) as \(n\to\infty\), together with the uniform integrability of \(X_j[n]\) (it is bounded by the interval \([0,1]\)) implies that the expected value of \(X_j[n]\) also converges to \(0\). (This is a standard result in probability theory. See, e.g., \cite[Theorem 5.5.2]{Durrett}.) Thus, for all \(\delta>0\) there exists \(N_2>0\) such that for all \(n\geq N_2\),
		\begin{equation}\label{eq:yj-conv-mean}
		\E[X_j[n]] < \dfrac{\delta}{2}.
		\end{equation}
		We want to show that \(X_i[n]\xrightarrow{P}0\) as \(n\to\infty\). To this end, recall that \(\alpha\in(0,1)\) and that \(t_{ij}>0\) since we assume that \(i\) puts a trust in \(j\). Let \(0<\varepsilon<\alpha t_{ij}\) and \(\delta > 0\). Then for all \(n>\max\{N_1,N_2\}\), we have
		\begin{equation*}
		\begin{aligned}
		&\mathbb{P}(X_i[n+1] > \varepsilon)\\
		&= \mathbb{P}(X_i[n+1] \geq 1-\varepsilon) + \mathbb{P}(\varepsilon < X_i[n+1] < 1-\varepsilon) \\
		&< \mathbb{P}(X_i[n+1] \geq 1-\varepsilon) + \dfrac{\delta}{2}
		\end{aligned}
		\end{equation*}
		\begin{equation}
		\begin{aligned}
		&= \int_{\matr{x}}\mathbb{P}(X_i[n+1]\geq 1 - \varepsilon\mid \matr{X}[n] = \matr{x})f_{\matr{X}[n]}(\matr{x})d\matr{x} + \dfrac{\delta}{2}\\
		&=\int_{\matr{x}}\mathbb{P}\big((1-\alpha)x_i + \alpha\sum_{k=1}^{K}t_{ik}A_k[n]\geq 1 - \varepsilon\mid \matr{X}[n] = \matr{x}\big)\\
		&\qquad\cdot f_{\matr{X}[n]}(\matr{x})d\matr{x} + \dfrac{\delta}{2}\\
		&\leq \int_{\matr{x}}\mathbb{P}(1-\alpha + \alpha (1-t_{ij}(1-A_j[n]))\geq 1 - \varepsilon\mid \matr{X}[n] = \matr{x})\\
		&\qquad \cdot f_{\matr{X}[n]}(\matr{x})d\matr{x} + \dfrac{\delta}{2}\\
		&= \int_{\matr{x}}\mathbb{P}(\alpha t_{ij}(1-A_j[n])\leq \varepsilon\mid \matr{X}[n] = \matr{x}) f_{\matr{X}[n]}(\matr{x})d\matr{x} + \dfrac{\delta}{2}\\
		&= \int_{\matr{x}}\mathbb{P}(1-A_j[n]\leq\dfrac{\varepsilon}{\alpha t_{ij}}\mid \matr{X}[n] = \matr{x}) f_{\matr{X}[n]}(\matr{x})d\matr{x} + \dfrac{\delta}{2}\\
		&\leq \int_{\matr{x}}x_j f_{\matr{X}[n]}(\matr{x})d\matr{x} + \dfrac{\delta}{2}\\
		&= \E[X_j[n]] + \dfrac{\delta}{2} < \dfrac{\delta}{2} + \dfrac{\delta}{2} = \delta,
		\end{aligned}
		\end{equation}
		where the first inequality follows from \eqref{eq:yi-prob-zero-or-one}, the second inequality follows from the facts that \(\sum_{k=1}^{K}t_{ik}A_k[n] \leq \sum_{k=1, k\neq j}^{K}t_{ik} + t_{ij}A_j[n] = 1 - t_{ij} +t_{ij}A_j[n]\) and \(X_i[n]\leq 1\), and the last inequality follows from \eqref{eq:yj-conv-mean}. We have also used the fact that \(A_j[n]\sim\mathrm{Bernoulli}(X_j[n])\) conditioned on \(X_j[n]\).
	\end{proof}
	\bibliographystyle{IEEEtran}
	\bibliography{IEEEabrv,asilomar2019} 
\end{document}